\documentclass[twocolumn,superscriptaddress,nofootinbib,preprintnumbers,showpacs,tightenlines,notitlepage]{revtex4}
\usepackage[latin1]{inputenc}
\usepackage{graphicx}
\usepackage{amssymb}
\usepackage{color}
\usepackage{float}
\usepackage{amsmath}
\usepackage{amsfonts}
\usepackage{dcolumn}
\usepackage{hyperref}
\usepackage{amsthm}
\usepackage{color}

\def\3nab{\tilde{\nabla}}

\def\la {\langle}
\def\ra {\rangle}

\def\be {\begin{equation}}
\def\ee {\end{equation}}
\def\ba {\begin{eqnarray}}
\def\ea {\end{eqnarray}}
\newtheorem{lem}{Lemma}
\newtheorem{prop}{Proposition}

\newcommand{\barray}{\begin{array}}
\newcommand{\earray}{\end{array}}

 \newcommand{\nab}{\nabla}

\newcommand \ep {\epsilon}

\newcommand \om {\omega}

\begin{document}

\title{Matter shear and vorticity in conformally flat spacetimes}

\author{Roger Mayala}
\email{rmayala2@gmail.com }
\affiliation{Astrophysics Research Centre, School of Mathematics, Statistics and Computer Science, University of KwaZulu-Natal, Private Bag X54001, Durban 4000, South Africa.}
\author{Rituparno Goswami}
\email{Goswami@ukzn.ac.za}
\affiliation{Astrophysics Research Centre, School of Mathematics, Statistics and Computer Science, University of KwaZulu-Natal, Private Bag X54001, Durban 4000, South Africa.}
\author{Sunil D. Maharaj}
\email{Maharaj@ukzn.ac.za}
\affiliation{Astrophysics Research Centre, School of Mathematics, Statistics and Computer Science, University of KwaZulu-Natal, Private Bag X54001, Durban 4000, South Africa.}

\begin{abstract}
In this paper we consider conformally flat perturbations on the Friedmann Lemaitre Robertson Walker (FLRW) spacetime containing a general matter field. Working with the linearised field equations, we unearth some important geometrical properties of matter shear and vorticity and how they interact with the thermodynamical quantities in the absence of any free gravity powered by the Weyl curvature. As there are hardly any physically realistic rotating exact conformally flat solutions in general relativity, these covariant and gauge invariant results bring out transparently the role of vorticity in the linearised regime.  Most interestingly, we demonstrate that the matter shear obeys a transverse traceless tensor wave equation, and the vorticity obeys a vector wave equation in this regime. These shear and vorticity waves replace the gravitational waves in the sense that they causally carry the information about local change in the curvature of these spacetimes.

\end{abstract}

\pacs{04.20.Cv	, 04.20.Dw}

\maketitle
\section{Introduction}
Conformal flatness is a condition that is often applied in the study of gravitational interactions, since many of these models characterise spacetimes of physical importance, (e.g. Friedmann Lemaitre Robertson Walker (FLRW)). Several other classes of conformally flat spacetimes have been applied in cosmology, including generalised Friedmann models, generalised Schwarzschild interior models, Bertottic-Robinson models and radiation fields \cite{Stephani}. In a conformally flat spacetime the Weyl tensor vanishes identically \cite{HE} and the technique of embedding has proved to be a useful tool in generating a variety of exact solutions \cite{Krasinski,Ellisbook}, with perfect fluids, pure radiation and electromagnetic fields. Conformal flatness is also widely used in studying gravitational collapse for various matter fields. Collapse in the presence of scalar fields was studied in \cite{Narayan}, with dissipative matter giving rise to radiating stellar configurations \cite{Herrera1,Herrera2,Suri,Maharaj,Ranjan}. Conformal flatness has also proven to be useful in constructing static anisotropic stars that can represent real pulsars \cite{Ivanov}. In addition to this, vanishing of the Weyl tensor helps to solve the field equations in modified gravity theories (see for example \cite{Soumya} for $f(R)$-theories of gravity, and \cite{Sudan} for Einstein-Gauss-Bonnet (EGB) gravity).\\
 
Although a lot of works have been done on the conformal symmetries of these spacetimes and also numerous solutions have been found, most of these stick to spacetimes with specific symmetries (for example, spherical or cylindrical symmetries) and very specific types of matter fields. The main reason behind this is, in spite of Weyl tensor being identically zero in these spacetimes, the field equations still remain extremely complicated for any general treatment. To overcome this hurdle, we start by taking baby steps. We consider a general, but conformally flat perturbations on Friedmann Lemaitre Robertson Walker (FLRW) spacetime and work with linearised field equations up to the first order. In other words, the Weyl tensor vanishes identically in the perturbed spacetime. Nevertheless, all the other quantities, that were zero in the background become first order quantities in the perturbed scenario and we deal with a general form of matter with anisotropic stresses and heat flux to the first order of smallness. The main aim of this investigation is to track the behaviour of geometrical quantities, like matter shear or vorticity in the perturbed conformally flat scenarios. This will definitely give an indication as how these quantities will behave in the most general treatment of field equations. \\

In this paper we use a local semitetrad covariant formalism, and hence all the perturbation results are frame invariant and gauge invariant in general. Pioneering works in this regard was done by Bruni et. al. \cite{Marco}, where the scalar, vector and tensor modes of the perturbations were treated in a covariant and gauge invariant way. Further works on cosmological gravitational waves were done in \cite{Dunsby,Roy}. Similar techniques were used in \cite{Shear1,Shear2}, to show that in a shear-free perturbation of FLRW spacetime, matter cannot expand and rotate simultaneously even in the linearised regime. \\

Unless otherwise specified, we use natural units ($c=8\pi G=1$) throughout this paper, Latin indices run from 0 to 3. 
The symbol $\nabla$ represents the usual covariant derivative and $\partial$ corresponds to partial differentiation. 
We use the $(-,+,+,+)$ signature and the Riemann tensor is defined by
\begin{equation}
R^{a}{}_{bcd}=\Gamma^a{}_{bd,c}-\Gamma^a{}_{bc,d}+ \Gamma^e{}_{bd}\Gamma^a{}_{ce}-\Gamma^e{}_{bc}\Gamma^a{}_{de}\;.
\end{equation}
The Ricci tensor is obtained by contracting the {\em first} and the {\em third} indices
\begin{equation}\label{Ricci}
R_{ab}=g^{cd}R_{cadb}\;.
\end{equation}
The symmetrisation and the anti-symmetrisation  over the indices of a tensor are defined by
\begin{equation}
T_{(a b)}= \frac{1}{2}\left(T_{a b}+T_{b a}\right)\;,\qquad T_{[a b]}= \frac{1}{2}\left(T_{a b}-T_{b a}\right)\,.
\end{equation}
The Hilbert--Einstein action in the presence of matter is given by
\begin{equation}
{\cal S}=\frac12\int d^4x \sqrt{-g}\left[R-2{\cal L}_m \right]\;,
\end{equation}
variation of which gives the Einstein field equations as
\be
G_{ab} =T_{ab}\;.
\ee

\section{1+3 covariant splitting of spacetimes}
This formalism \cite{EllisCovariant} is based on a local 1+3 threading of the spacetime manifold 
and has been a very handy tool for understanding many geometrical and physical aspects of relativistic fluid flows, both in general relativity or in the gauge invariant, 
covariant perturbation formalism \cite{Ellisbook}.
We first define a timelike congruence with a unit tangent vector $u^a$ along the fluid flow lines. Then the spacetime is split
locally in the form $R\otimes V$ where  $R$ denotes the worldline
along $u^a$ and  $V$ is the 3-space perpendicular to $u^a$. Any
vector $X^a$  can then be projected on the 3-space by the projection
tensor $h^a{}_b=g^a{}_b+u^au_b$.  The choice of $u^a$ naturally defines 
two derivatives: the \textit{covariant
time derivative} along the observers' worldlines (denoted by a dot), and the fully orthogonally
\textit{projected covariant derivative} $D$ on the three-dimensional space. 
For any tensor $ S^{a..b}{}_{c..d}$, we have
\be
\dot{S}^{a..b}{}_{c..d}{} = u^{e} \nab_{e} {S}^{a..b}{}_{c..d} \;,
\ee
and 
\be D_{e}S^{a..b}{}_{c..d}{} = h^a{}_f
h^p{}_c...h^b{}_g h^q{}_d h^r{}_e \nab_{r} {S}^{f..g}{}_{p..q}\;,
\ee 
with total projection on all the free indices.  Angle brackets
denote orthogonal projections of vectors, and the orthogonally
\textit{projected symmetric trace-free} PSTF part of tensors: 
\be
V^{\la a \ra} = h^{a}{}_{b}V^{b}~, ~ S^{\la ab \ra} = \left[
h^{(a}{}_c {} h^{b)}{}_d - \frac{1}{3} h^{ab}h_{cd}\right] S^{cd}\;.
\label{PSTF} 
\ee 
This also defines the 3-volume element 
\be \ep_{a b c}=-\sqrt{|g|}\delta^0_{\left[ a
\right. }\delta^1_b\delta^2_c\delta^3_{\left. d \right] }u^d\;.
\label{eps1} 
\ee 
The covariant derivative of the timelike vector $u^a$ can now be
decomposed into the irreducible part as 
\be
\nabla_au_b=-\dot{u}_au_b+\frac13h_{ab}\Theta+\sigma_{ab}+\ep_{a b
c}\om^c\;, 
\ee 
where $\dot{u}_a$ is the acceleration,
$\Theta=D_au^a$ is the expansion, $\sigma_{ab}=D_{\la a}u_{b \ra}$
is the shear tensor and $\om^a=\ep^{a b c}D_bu_c$ is the vorticity
vector. Similarly the Weyl curvature tensor can be decomposed
irreducibly into the Gravito-Electric and Gravito-Magnetic parts as
\be 
E_{ab}=C_{abcd}u^cu^d=E_{\la ab\ra}\;;\;
H_{ab}=\frac12\ep_{acd}C^{cd}{}_{be}u^e=H_{\la ab\ra}\;, 
\ee 
which allows for a covariant description of tidal forces and gravitational
radiation. The energy momentum tensor for a general matter field can 
be similarly decomposed as follows
\be
T_{ab}=\mu u_au_b+q_au_b+q_bu_a+ph_{ab}+\pi_{ab}\;,
\ee
where $\mu=T_{ab}u^au^b$ is the energy density, $p=(1/3 )h^{ab}T_{ab}$ is the isotropic pressure, $q_a=q_{\la a\ra}=-h^{c}{}_aT_{cd}u^d$ is the 3-vector defining 
the heat flux and $\pi_{ab}=\pi_{\la ab\ra}$ is the anisotropic stress. 
\section{Conformally flat perturbation around FLRW spacetime}

To see in a transparent manner, how the absence of the Weyl tensor affects other geometrical and thermodynamic quantities, we consider a conformally flat perturbation of FLRW manifold. In other words, the background metric is given as 
\be\label{FLRW}
ds^2=-dt^2+\frac{a^2(t)}{1-kr^2}dr^2+r^2a^2(t)(d\theta^2+\sin^2\theta d\phi^2)\;.
\ee
One can easily see that the non-zero geometric and thermodynamic quantities for the background are
\be
\mathcal{D}_0=\{\Theta, \mu, p\}\;.
\ee
We now perturb this background spacetime in such a way that the perturbed spacetime still remains conformally flat, that is the Weyl tensor remains identically zero. In that case the quantities that are of first order smallness in the perturbed spacetime are given as
\be
\mathcal{D}_1=\{\dot{u}_{\la a\ra}, \om_{\la a\ra},\sigma_{\la ab\ra}, q_{\la a\ra}, \pi_{\la ab\ra} \}\;.
\ee
The Riemann tensor of the perturbed spacetime can now be completely specified in therms of the matter variables as follows:
\ba
R^{ab}{}{}_{cd} & = &-2\left(u^{[a}h^{b]}{}_{[c}q_{d]}+u_{[c}h^{[a}{}_{d]}q^{b]}\right.\nonumber\\
&&+\left.u^{[a}u_{[c}\pi^{b]}{}_{d]}- h^{[a}{}_{[c}\pi^{b]}{}_{d]}\right)\nonumber\\
&&+\frac{2}{3}\left[(\mu +3p)u^{[a}u_{[c}h^{b]}{}_{d]}+\mu h^a{}_{[c}h^b{}_{d]}\right]\;.
\ea
We now use this form of the Riemann tensor to get the Ricci identities of the vector $u^a$ and doubly contracted Bianchi identities (linearised by setting any higher power of the first order quantities to zero). These 
equations can be further classified into evolution (time derivative) equations and constraints on 3-space. Solutions to these equation will then completely specify the dynamics of the perturbed spacetimes to the linear order and furthermore all these equations remain gauge invariant by Stewart and Walker lemma \cite{SW}.
\subsection{Evolution equations}
The evolution equations take the following form
\be\label{E1}
\dot{\Theta}-D_a\dot{u}^a = -\frac{1}{3}\Theta^2 -\frac{1}{2}(\mu + 3p),
\ee
\be\label{E2}
\dot{\sigma}^{\la ab \ra}-D^{\la a}\dot{u}^{b\ra}=\frac{1}{2}\pi^{ab} -\frac{2}{3}\Theta\sigma^{ab},
\ee
\be\label{E3}
\dot{\omega}^{\la a\ra}-\frac{1}{2}\epsilon^{abc}D_b\dot{u}_c = -\frac{2}{3}\Theta\omega^a,
\ee
\be\label{E4}
\dot{\pi}^{<ab>}+D^{<a}q^{b>}
= -(\mu +p)\sigma^{ab}- \frac{\Theta}{3}\pi^{ab},
\ee
\be\label{E5}
\dot{q}^{\la a\ra}+D^ap +D_b \pi^{ab}=-\frac{4}{3}\Theta q^a-(\mu +p)\dot{u}^a ,
\ee
\be\label{E6}
\dot{\mu} +D_aq^a= -\Theta(\mu +p).
\ee
\subsection{Constraints}
The constraints equation on a given spatial 3-surface can be written as
\be\label{C1}
(C_1)^a \equiv D_b\sigma^{ab}-\frac{2}{3}D^a\Theta+\epsilon^{abc}D_b\omega_c+q^a=0,
\ee
\be\label{C2}
(C_2) \equiv D_a\omega^a=0 ,
\ee
\be\label{C3}
(C_3)^{ab}\equiv D^{<a}\omega^{b>}-\epsilon^{cd<a}D_c\sigma^{b>}{}_d=0,
\ee
\be\label{C4}
(C_4)^a\equiv \frac{1}{2}D_b\pi^{ab}-\frac{1}{3}D^a\mu +\frac{1}{3}\Theta q^a=0,
\ee
\be\label{C5}
(C_5)^a\equiv (\mu +p)\omega^a +\frac{1}{2}\epsilon^{abc}D_bq_c =0,
\ee
\be\label{C6}
(C_6)^{ab}\equiv \epsilon^{cd<a}D_c \pi^{b>}{}_d = 0.
\ee
We note that the last constraint (\ref{C6}) is not an original constraint of the field equations. We get this constraint by forcing the perturbed spacetime to be conformally flat. A consistent evolution of this constraint (so that this is valid at all epochs) will give further restrictions on different geometrical and thermodynamical quantities as we shall see in the next section. 

\subsection{Commutation relations}
For a linearised model about an FLRW spacetime, we have the following commutation relations between the derivative operators. For any scalar function $f$,
\ba 
D_{[a}D_{b]}{f} &=& \epsilon_{abc}\omega^c\dot{f},\label{com11}\\
\epsilon^{abc}D_bD_cf &= &2\omega^a\dot{f},\label{com12}\\
h^a{}_b[D^bf]\dot{}&= &D^a\dot{f}-\frac{1}{3}\Theta D^af. \label{com13}
\ea
Also for any first order $3$-vector $V^a$, we have
\ba 
h^a{}_ch^b{}_d[D^cV^d]\dot{}&=& D^a\dot{V}^{<b>}-\frac{1}{3}\Theta D^aV^b,\label{com21}\\ 
(D_aV^a)\dot{}&=& D_a\dot{V}^{<a>}-\frac{1}{3}\Theta D_aV^a,\label{com22}\\
D_{[a}D_{b]}V_c &=& \frac{1}{3}\left(\mu - \frac{1}{3}\Theta^2\right)h_{c[a}V_{b]},\label{com23}\\
h^a{}_b(\epsilon^{bcd}D_cV_d)\dot{} &= &\epsilon^{abc}D_b\dot{V}_{<c>}-\frac{1}{3}\Theta \epsilon^{abc}D_bV_c.\label{com24}
\ea
Similarly, for any first order second rank $3$-tensor $A^{ab}$, we have 
\ba 
D_{[a}D_{b]}A^{cd} &= &\frac{2}{3}\left(\mu - \frac{1}{3}\Theta^2\right)h_{\;\;[a}^{(c}A^{d)}{}_{b]},\label{com31}\\ 
h^a{}_ch^b{}_d[\epsilon^{ef<c}D_eA^{d>}{}_f]\dot{}&=&\epsilon^{dc<a}D_c \dot{A}^{<b>e>}h_{de}\nonumber\\
&&-\frac{1}{3}\Theta\epsilon^{cd<a}D_cA^{b>}{}_d.\label{com32}
\ea
For brevity, we will use the following notation in this paper (for any 3-vector $V$ and second rank 3-tensor $A_{ab}$)
\ba
{\rm div}\, V=D_aV^a, & ({\rm curl}\, V)_a=\epsilon_{abc}D^bV^c,\\
({\rm div}\, A)_a=D^bA_{ab}, & ({\rm curl}\, A)_{ab}=\epsilon_{cd\la a}D^cA^d_{\;b\ra}.
\ea
Further to the above, there are some important identities of first order vectors and tensors in perturbed FLRW spacetime, which we list below \cite{Roy}
\be\label{id1}
D^a({\rm curl}\, V)_a=0\,,
\ee
\be\label{id2}
D^b({\rm curl}\, A)_{ab}=\frac12{\rm curl}\,(D^bA_{ab})\,,
\ee
\ba\label{id3}
({\rm curl}\, {\rm curl}\,V)_a&=&-D^2V_a+D_a(D^bV_b)\nonumber\\
&&+\frac23\left(\mu- \frac{1}{3}\Theta^2\right)V_a,
\ea
\ba\label{id4}
({\rm curl}\, {\rm curl}\,A)_{ab}&=&-D^2A_{ab}+\frac32D_{\la a}D^cA_{b\ra c}\nonumber\\
&&+\left(\mu- \frac{1}{3}\Theta^2\right)A_{ab}.
\ea

\section{Some geometrical results on shear and vorticity}
In this section we will state and prove several important geometrical properties of matter shear and viscosity in the perturbed, conformally flat spacetime. We would like to emphasise here, that throughout this paper we consider the matter field satisfying the weak energy conditions and that would imply $(\mu+p)$ is strictly greater than zero. We start by proving a lemma for a general perturbed FLRW spacetime, which will be used for other results. 
\begin{lem}
For a perturbed FLRW spacetime, the spatial variation tensor $D_aV_b$ for any first order 3-vector $V^a$ is curl free in the linearised regime. 
\end{lem}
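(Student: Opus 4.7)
The plan is to compute the curl of $A_{ab}\equiv D_aV_b$ directly from its definition and reduce it, using only the vector commutation relation \reff{com23}, to something that is antisymmetric in the free indices $a,b$ and hence vanishes under the PSTF projection.

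First, I would write
\be
({\rm curl}\, A)_{ab}=\epsilon_{cd\la a}D^cA^d{}_{b\ra}=\epsilon_{cd\la a}D^cD^dV_{b\ra}\,.
\ee
Since $\epsilon_{cda}$ is antisymmetric in $c,d$, only the antisymmetric part $D^{[c}D^{d]}V_b$ contributes to the contraction. This is precisely what the commutation relation \reff{com23} controls: it states that $D_{[c}D_{d]}V_b=\tfrac{1}{3}(\mu-\tfrac{1}{3}\Theta^2)h_{b[c}V_{d]}$. Note that this replacement stays linear in the perturbation because $V^a$ is already first order and the prefactor $(\mu-\Theta^2/3)$ is evaluated on the FLRW background.

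Next I would substitute this expression back and perform the $\epsilon_{cda}$ contraction. Expanding the antisymmetrization $h_b^{[c}V^{d]}$ produces two terms, each a single $\epsilon\,h\,V$ contraction; using the cyclic symmetry of $\epsilon_{abc}$, both terms combine to give a single contribution proportional to $\epsilon_{abd}V^d$, with the background scalar $(\mu-\Theta^2/3)$ as an overall coefficient. In other words, before PSTF projection one finds
\be
\epsilon_{cda}D^cD^dV_b=\tfrac{1}{3}\left(\mu-\tfrac{1}{3}\Theta^2\right)\epsilon_{abd}V^d,
\ee
which is manifestly antisymmetric under $a\leftrightarrow b$. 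Applying the symmetric-trace-free projection on $a,b$ therefore annihilates it, yielding $({\rm curl}\,A)_{ab}=0$ to linear order.

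I do not expect any serious obstacle here: the computation is essentially a bookkeeping exercise in the $\epsilon$--$h$ identities, and the single nontrivial input is the commutator \reff{com23}. The only point requiring care is the index shuffling after the commutator is inserted, where one must verify that the two terms arising from the antisymmetrization do not cancel but combine, and that the resulting tensor is genuinely antisymmetric in the free indices so that the PSTF projection (rather than just the trace) is what kills it.
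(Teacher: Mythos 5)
Your proposal is correct and follows essentially the same route as the paper: insert the commutation relation \reff{com23} into $\epsilon_{cd\la a}D^cD^dV_{b\ra}$ and observe that the result is proportional to $\epsilon_{abd}V^d$, whose antisymmetry in $a,b$ makes it vanish under the PSTF projection. The only difference is presentational (you contract first and then project, while the paper projects the commutator directly), so there is nothing to add.
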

\begin{proof}
From the commutation relation (\ref{com23}), we know that
\begin{equation} 
(D_cD_d-D_dD_c)V_b = \frac{1}{3}\left(\mu - \frac{1}{3}\Theta^2\right)(h_{bc}V_d-h_{bd}V_c)\;.
\end{equation}
Acting with $\epsilon^{adc}$ on both sides of the above equation, symmetrising on the indices $a,b$ and subtracting the trace, we get
\begin{equation} 
\epsilon^{dc<a}D_cD_dV^{b>} = -\frac{1}{3}\left(\mu - \frac{1}{3}\Theta^2\right)\epsilon^{<ab>d}V_d\;.
\end{equation}
The right hand side of the above equation is identically zero as $\epsilon^{abd}$ is completely antisymmetric. Hence we get the required result, $({\rm curl}\, D_aV_b)=0$.
\end{proof}
We note that although the above result is a constraint on a given hypersurface, it is consistently time propagated. To show this, we note that if $V^a$ is a first order 3-vector, then so is $\dot{V}^{\la a\ra}$. Then using the commutation relation (\ref{com24}), we see that $({\rm curl}\, D_aV_b)\dot{}$ vanishes identically. 
\begin{prop}
For a linear and conformally flat perturbation of FLRW spacetime, the shear tensor is curl free. 
\end{prop}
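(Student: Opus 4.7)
The plan is to extract two independent expressions for ${\rm curl}(D^{\la a}q^{b\ra})$ that turn out to carry opposite signs in front of $(\mu+p)({\rm curl}\,\sigma)^{ab}$, so that their consistency forces $({\rm curl}\,\sigma)^{ab}=0$ once the weak energy condition $\mu+p>0$ is invoked. The key ingredients are the conformal-flatness constraint \reff{C6}, the evolution equation \reff{E4} for the anisotropic stress, the constraint \reff{C5} which ties $\omega^a$ to ${\rm curl}\,q^a$, the constraint \reff{C3} which ties $({\rm curl}\,\sigma)^{ab}$ to $D^{\la a}\omega^{b\ra}$, and the preceding Lemma applied to $V^a=q^a$.

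First I would confirm that \reff{C6} is consistently time-propagated. Applying the commutation relation \reff{com32} with $A^{cd}=\pi^{cd}$, the LHS is the projected time derivative of $({\rm curl}\,\pi)^{cd}$ and so vanishes identically by \reff{C6}, while the $\Theta$-term on the RHS is also killed by \reff{C6}. This leaves $({\rm curl}\,\dot\pi)^{ab}=0$. Next I would take the curl of the evolution equation \reff{E4}. Since $\mu+p$ and $\Theta$ are background scalars, their spatial gradients are first order, and products of gradients with first-order quantities drop out at linear order, so these coefficients commute freely through the curl. Combining with $({\rm curl}\,\dot\pi)^{ab}=0$ and $({\rm curl}\,\pi)^{ab}=0$ yields the first expression
\be
{\rm curl}(D^{\la a}q^{b\ra})=-(\mu+p)({\rm curl}\,\sigma)^{ab}.
\ee

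Third, I would apply the Lemma to $V^a=q^a$, which says ${\rm curl}(D_aq_b)=0$, and decompose $D_aq_b$ into its PSTF, antisymmetric and trace irreducible pieces. The trace piece $\tfrac{1}{3}h_{ab}\,{\rm div}\,q$ has vanishing curl because $D_ch=0$ and $\epsilon_{cba}$ is antisymmetric in $(a,b)$, and the antisymmetric piece $\tfrac12\epsilon_{abc}({\rm curl}\,q)^c$ reduces, via the double-$\epsilon$ identity $\epsilon^{cda}\epsilon^{b}{}_{de}=h^{bc}\delta^{a}{}_e-h^{ab}\delta^{c}{}_e$, to $\tfrac12 D^{\la a}({\rm curl}\,q)^{b\ra}$. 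Therefore ${\rm curl}(D^{\la a}q^{b\ra})=-\tfrac12 D^{\la a}({\rm curl}\,q)^{b\ra}$. Substituting $({\rm curl}\,q)^a=-2(\mu+p)\omega^a$ from \reff{C5} (again pulling the background factor through the spatial derivative), and then using $D^{\la a}\omega^{b\ra}=({\rm curl}\,\sigma)^{ab}$ from \reff{C3}, gives the second expression
\be
{\rm curl}(D^{\la a}q^{b\ra})=+(\mu+p)({\rm curl}\,\sigma)^{ab}.
\ee

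Equating the two expressions forces $2(\mu+p)({\rm curl}\,\sigma)^{ab}=0$, and the weak energy condition delivers $({\rm curl}\,\sigma)^{ab}=0$. The main obstacle will be the tensor-algebraic part of Step 3: the double-$\epsilon$ manipulation showing that the curl of the antisymmetric piece of $D_aq_b$ reduces cleanly to $\tfrac12 D^{\la a}({\rm curl}\,q)^{b\ra}$. Once that identity is in place, the rest of the argument is routine linear algebra with careful sign tracking.
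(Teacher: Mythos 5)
Your Steps 1--2 reproduce the paper's own derivation: demanding that the conformal--flatness constraint \reff{C6} hold at all epochs, using \reff{com32} to conclude $({\rm curl}\,\dot\pi)^{ab}=0$, and taking the curl of \reff{E4} to obtain ${\rm curl}\,(D^{\la a}q^{b\ra})=-(\mu+p)({\rm curl}\,\sigma)^{ab}$. The decisive problem is the sign in your Step 3. The double-$\epsilon$ identity you quote is correct, but it contracts the \emph{second} slots of the two epsilons, whereas in the curl of the antisymmetric piece the contraction is between the second slot of the outer epsilon and the \emph{first} slot of the epsilon sitting inside $D_{[a}q_{b]}=\tfrac12\epsilon_{abc}({\rm curl}\,q)^c$; reordering costs a minus sign. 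Carrying it through with the paper's conventions,
\be
\epsilon_{cd\la a}D^c\Bigl(\tfrac12\,\epsilon^{d}{}_{b\ra e}\,({\rm curl}\,q)^{e}\Bigr)
=\tfrac12\Bigl(h_{\la ab\ra}\,D^{c}({\rm curl}\,q)_{c}-D_{\la b}({\rm curl}\,q)_{a\ra}\Bigr)
=-\tfrac12\,D_{\la a}({\rm curl}\,q)_{b\ra}\,,
\ee
so that Lemma~1 together with your (correct) treatment of the trace piece gives ${\rm curl}\,(D^{\la a}q^{b\ra})=+\tfrac12 D^{\la a}({\rm curl}\,q)^{b\ra}$, not $-\tfrac12 D^{\la a}({\rm curl}\,q)^{b\ra}$. (A quick flat-space check with $q=(-y\,g(z),\,x\,g(z),\,0)$ confirms the plus sign.)

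With the correct sign your route closes into a circle instead of a contradiction: substituting $({\rm curl}\,q)^a=-2(\mu+p)\omega^a$ from \reff{C5} and $D^{\la a}\omega^{b\ra}=({\rm curl}\,\sigma)^{ab}$ from \reff{C3} turns your ``second expression'' into ${\rm curl}\,(D^{\la a}q^{b\ra})=-(\mu+p)({\rm curl}\,\sigma)^{ab}$, identical to the first. Equating the two yields $0=0$, the weak energy condition has nothing to act on, and $({\rm curl}\,\sigma)^{ab}=0$ does not follow; the opposite signs you engineered are an artifact of the epsilon slip. The paper's proof diverges from yours exactly here: after arriving at the same relation $({\rm curl}\, D_{\la a}q_{b\ra})+(\mu+p)({\rm curl}\,\sigma)_{\la ab\ra}=0$, it does not re-express the first term through \reff{C5} and \reff{C3} (which, as your corrected computation shows, merely regenerates the same constraint content); it instead kills that term outright by appealing to Lemma~1 and concludes immediately. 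If you want to repair your argument you need that independent input --- the vanishing of the curl of the heat-flux gradient term itself --- rather than a second expression built from \reff{C5} and \reff{C3}; note also that your decomposition makes explicit that Lemma~1 as proved concerns the full gradient $D_aq_b$ while the term in \reff{E4} is its PSTF part, differing precisely by $\tfrac12 D_{\la a}({\rm curl}\,q)_{b\ra}$, so this step deserves careful handling in any complete write-up.
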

\begin{proof}
To prove this, we demand that the new constraint (\ref{C6}), due to the absence of Weyl, be consistently time propagated. In other words, we must have $({\rm curl}\,\pi)\dot{}_{\la ab\ra}=0$. Using the commutation relation (\ref{com32}), we then see that $({\rm curl}\,\dot{\pi})_{\la ab\ra}$ must vanish identically. Now we use the evolution equation (\ref{E4}), to get
\be
({\rm curl}\, D_{\la a} q_{b\ra})+(\mu+p)({\rm curl}\, \sigma)_{\la ab\ra}=0\;.
\ee
The first term in the LHS is zero by Lemma 1. Since the weak energy condition demands that $(\mu+p)$ is strictly greater than zero, we see that the shear tensor must be curl free. Again, this is a result on a given hypersurface. We time propagate this constraint and using (\ref{com32}) and (\ref{E2}), we see that $({\rm curl}\, \sigma)\dot{}_{\la ab\ra}=0$ is identically satisfied. Hence the shear tensor being curl free is true at all epochs.
\end{proof}
\begin{prop}
For a linear and conformally flat perturbation of FLRW spacetime, 
\begin{enumerate}
\item For non-vanishing vorticity, the heat flux vector is purely axial on a given hypersurface, and consistent time propagation of this constraint gives an implicit equation of state relating the density and isotropic pressure.
\item The vorticity is generated purely by the curl of heat flux vector for all epochs. 
 \end{enumerate}
\end{prop}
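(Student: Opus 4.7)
The plan is to derive both parts from constraint \reff{C5}, $(\mu+p)\omega^a+\tfrac12\epsilon^{abc}D_bq_c=0$, which already states that $\omega^a$ is algebraically determined by the curl of $q^a$ on a single hypersurface. What must be shown is that (i) demanding this constraint propagate consistently in time forces $q^a$ to be divergence-free whenever $\omega^a\neq 0$, so that in the Helmholtz decomposition $q^a=D^a\phi+(\mathrm{curl}\,\chi)^a$ only the axial/curl piece survives; (ii) propagating the resulting condition $D_aq^a=0$ a second time yields the advertised implicit equation of state; and (iii) the original relation then holds at every epoch, which is Part 2.

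For step (i), I would compute $h^a{}_d[(C_5)^d]\dot{\,}=0$. Using \reff{com24} to commute the time derivative past the curl, \reff{E3} to substitute $\dot\omega^{\la a\ra}$, and \reff{E5} to substitute $\dot q^{\la c\ra}$ inside the residual $\epsilon^{abc}D_b\dot q_{\la c\ra}$, I would then apply \reff{C4} to reduce $D^d\pi_{cd}$ to a combination of $D_c\mu$ and $\Theta q_c$, and \reff{com12} to collapse the scalar double derivatives $\epsilon^{abc}D_bD_cp$ and $\epsilon^{abc}D_bD_c\mu$ into multiples of $\omega^a\dot p$ and $\omega^a\dot\mu$; the residual $\epsilon^{abc}D_bq_c$ is eliminated by reusing \reff{C5} itself. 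Collecting, I expect the terms proportional to $\epsilon^{abc}D_b\dot u_c$---one coming from $(\mu+p)\dot\omega^{\la a\ra}$ via \reff{E3}, one from the $(\mu+p)\dot u^c$ piece of \reff{E5}---to cancel, and the remaining terms along $\omega^a$ to regroup into $\tfrac13[\dot\mu+\Theta(\mu+p)]\omega^a$. For $\omega^a\neq 0$ this forces $\dot\mu+\Theta(\mu+p)=0$, and comparison with \reff{E6} then yields $D_aq^a=0$, the "purely axial" property.

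For step (ii), I would time-propagate the new constraint $D_aq^a=0$; by \reff{com22} this is equivalent to $D_a\dot q^{\la a\ra}=0$. Substituting \reff{E5} and using \reff{C4} to evaluate $D_aD_b\pi^{ab}=\tfrac23 D^2\mu$ (the $\Theta D_aq^a$ term vanishing by what has just been established), and discarding products of two first-order quantities, the condition reduces to a scalar identity of the form
\begin{equation}
D^2p+\tfrac{2}{3}D^2\mu+(\mu+p)D_a\dot u^a=0,
\end{equation}
which, together with the propagated energy conservation $\dot\mu+\Theta(\mu+p)=0$, constitutes the implicit relation between $\mu$ and $p$ announced in the proposition.

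Step (iii) is then immediate: the calculation in step (i) shows that once $D_aq^a=0$ holds the time derivative of \reff{C5} vanishes identically, so \reff{C5} is preserved in time and the statement that the vorticity is generated purely by the curl of the heat flux holds at every epoch. I expect the main obstacle to be the algebraic bookkeeping in step (i)---in particular, verifying the non-obvious cancellation of all $\epsilon^{abc}D_b\dot u_c$ terms and keeping careful track of which products are second-order in the perturbation and must be dropped. The clean rearrangement into the combination $\dot\mu+\Theta(\mu+p)$ hinges on applying \reff{com12}, \reff{C4}, and \reff{C5} in the right order.
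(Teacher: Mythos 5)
Your route is correct, but it is genuinely different from the paper's for Part 1. The paper obtains the axial property by a purely spatial manipulation: it takes the curl of \reff{C4}, kills the $({\rm curl}\,{\rm div}\,\pi)^a$ term using the identity \reff{id2} together with the extra conformal-flatness constraint \reff{C6}, and then converts the remaining terms with \reff{com12} and \reff{C5} to get $[\dot\mu+\Theta(\mu+p)]\omega^a=0$, hence $D_aq^a=0$ via \reff{E6}. You instead time-propagate \reff{C5} directly, eliminating $D^b\pi_{ab}$ with \reff{C4} rather than ever invoking \reff{C6}; I have checked the bookkeeping you flagged as the main risk, and it does close: the two $\epsilon^{abc}D_b\dot u_c$ terms cancel, the $\dot p\,\omega^a$ terms cancel (important, since there is no evolution equation for $p$), and the residue is exactly $\tfrac13[\dot\mu+\Theta(\mu+p)]\omega^a$, so your conclusion follows for $\omega^a\neq 0$. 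What each approach buys: the paper's argument is shorter and makes explicit where the distinctive conformal-flatness condition ${\rm curl}\,\pi_{ab}=0$ does work, and it yields the on-hypersurface statement from the constraints alone; your argument is algebraically heavier but unifies the proposition, since the single computation of $[(C_5)^a]\dot{}$ both produces the new constraint $D_aq^a=0$ and shows that once it holds \reff{C5} is preserved in time --- which is precisely the consistency check the paper performs separately for Part 2 using \reff{E6}, \reff{E3} and \reff{E5}. Your step (ii) coincides with the paper's, giving $D^2\bigl(p+\tfrac23\mu\bigr)+(\mu+p)D_a\dot u^a=0$. One caveat, shared equally by the paper's own derivation (where it appears as $\omega^a\dot\mu$): the decisive term is $(D_bq^b)\,\omega^a$, formally a product of two first-order quantities, and both proofs treat its vanishing coefficient as a genuine constraint rather than discarding it at linear order; since you follow the paper's standard here, this is not a gap in your proposal.
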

\begin{proof}
To prove the first point, we take the curl of constraint (\ref{C4}) to get
\be
\frac12 ({\rm curl}\, {\rm div}\, \pi)^a-\frac13 ({\rm curl}\, D^a\mu)+\frac13\Theta({\rm curl}\, q)^a=0\;.
\ee
The first term of above equation can be written as $2D_b({\rm curl}\, \pi)^{ab}$ (by commutation (\ref{id2})), which vanishes because of the new constraint (\ref{C6}) . Hence the above equation becomes
\be
\frac13 ({\rm curl}\, D^a\mu)-\frac13\Theta({\rm curl}\, q)^a=0\;.
\ee
Now by the relation (\ref{com12}) we see that $({\rm curl}\, D^a\mu)=2\om^a\dot{\mu}$. Further, using (\ref{C5}), we get $({\rm curl}\, q)^a=-2(\mu+p)\om^a$. Putting all of these in the above equation, and noting that the vorticity is not vanishing, we get
\be
\dot{\mu}+\Theta(\mu+p)=0\;.
\ee
Comparing this with the evolution equation (\ref{E6}), we can easily see that 
\be
D_aq^a=0\;. 
\ee
In other words the heat flux vector has vanishing divergence and hence it is purely axial. To check that this constraint is consistently time propagated, we impose the condition
\be
(D_aq^a)\dot{}=0\;.
\ee
Using the commutation relation (\ref{com22}), we see that this gives $D_a\dot{q}^{\la a \ra}=0$. Now using the evolution equation (\ref{E5}) we get 
\be\label{eq1}
D^2p+D_aD_b\pi^{ab}+(\mu+p)D_a\dot{u}^a=0\;.
\ee
Now if we take the divergence of the constraint (\ref{C4}), we get 
\be
D_aD_b\pi^{ab}=\frac23D^2\mu\;.
\ee
Substituting the above in equation(\ref{eq1}) we get the implicit equation of state as a second order differential equation,
\be
D^2\left(p+\frac23\mu\right)+(\mu+p)D_a\dot{u}^a=0\;.
\ee
In other words, for the vorticity to remain non-zero at all epochs, the above equation of state relating the energy density and isotropic pressure must be satisfied.\\

The proof of the second point is obvious from the constraint equation (\ref{C5}). To see, whether this constraint is identically satisfied at all epochs, we take a dot of this constraint and use the density evolution equation  (\ref{E6}), vorticity evolution equation (\ref{E3}) and heat flux evolution equation (\ref{E5}). One can then easily check that this constraint is consistently time propagated and hence the vorticity is purely generated by the curl of heat flux vector for all epochs. 
\end{proof}

\begin{prop}
If the spacetime is perturbed in a conformally flat way about the FLRW background, then the spatial variation tensor of the vorticity is purely antisymmetric at all epochs. The curl of the vorticity is then generated by the heat flux vector and it's Laplacian. 
\end{prop}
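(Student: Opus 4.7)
The plan splits into two independent tasks matching the two assertions. For the antisymmetry of $D_a\omega_b$, I would decompose it into its trace, symmetric trace-free, and antisymmetric irreducible pieces, and show that the first two vanish. The trace $D_a\omega^a$ is zero immediately by constraint (\ref{C2}). For the symmetric trace-free part $D^{\la a}\omega^{b\ra}$, constraint (\ref{C3}) equates it with $\epsilon^{cd\la a}D_c\sigma^{b\ra}{}_d = ({\rm curl}\,\sigma)^{ab}$, which vanishes identically by Proposition 2. Combining these gives $D_a\omega_b = D_{[a}\omega_{b]}$.

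For the second assertion, I would start from constraint (\ref{C5}), inverting with the weak-energy-condition bound $(\mu+p)>0$ to write $\omega^a = -\tfrac{1}{2(\mu+p)}({\rm curl}\, q)^a$. Taking the curl of both sides and treating $(\mu+p)$ as a background quantity — so that its spatial gradient against a first-order vector is second order and dropped — I obtain $({\rm curl}\,\omega)_a = -\tfrac{1}{2(\mu+p)}({\rm curl}\,{\rm curl}\, q)_a$. Applying the double-curl identity (\ref{id3}) to $q_a$ expands the right side into $-D^2 q_a$, a gradient-of-divergence piece $D_a(D^b q_b)$, and a term proportional to $q_a$. The middle term is killed by Proposition 3, which forces $D_a q^a = 0$ whenever vorticity is non-vanishing. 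What remains expresses $({\rm curl}\,\omega)_a$ as a linear combination of $D^2 q_a$ and $q_a$, exactly the dependence on the heat flux and its Laplacian that the proposition claims.

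The only real obstacle is justifying the \emph{at all epochs} clause, but this piggybacks on the consistent time propagation already established for Propositions 2 and 3, so no fresh propagation check is required — one can also verify it directly by taking a dot of the antisymmetry relation and using commutation (\ref{com24}) together with the evolution equation (\ref{E3}). The remaining delicacy is careful bookkeeping at linear order: in particular, treating $(\mu+p)$, $\mu$, and $\Theta$ as background quantities when they appear beside first-order vectors, so that their spatial derivatives drop as second-order contributions. This is entirely standard in the covariant perturbation scheme, and with it in hand the argument reduces to algebraic manipulation of the constraints and commutation identities already catalogued in the excerpt.
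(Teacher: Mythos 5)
Your proposal is correct and follows essentially the same route as the paper: the trace and symmetric trace-free parts of $D_a\omega_b$ are killed via constraints (\ref{C2}) and (\ref{C3}) together with the curl-free shear result, and the second claim comes from taking the curl of (\ref{C5}), expanding with identity (\ref{id3}), and dropping the divergence term because the heat flux is divergence-free. Only note that your proposition labels are shifted by one relative to the paper (curl-free shear is its Proposition~1, $D_aq^a=0$ its Proposition~2), and the paper keeps $(\mu+p)$ on the left rather than dividing by it, which is an immaterial difference under the weak energy condition.
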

\begin{proof}
From the constraint equations (\ref{C2},\ref{C3}) and using the result from Proposition 1, we can immediately see that $D_a\omega^a$ and $D^{<a}\omega^{b>}$ vanish identically on a given epoch. These relations can then be time propagated to check that they remain true for all epochs. Since both the trace and the trace-free symmetric parts of the spatial variation tensor of the vorticity vanish at all epochs, it follows that this tensor must be purely antisymmetric. We now take the curl of the constraint equation (\ref{C5}), to get
\be
(\mu+p)({\rm curl}\, \om)^a=-\frac12({\rm curl}\, {\rm curl}\, q)^a\;.
\ee
 Now using the identity (\ref{id3}) and the result from Proposition 2, we get 
 \be
(\mu+p)({\rm curl}\, \om)^a=\frac12D^2q^a-\frac13\left(\mu- \frac{1}{3}\Theta^2\right)q^a\;.
\ee
\end{proof}
It is interesting to see that there exists a class of solutions with non-zero heat flux, which are the solution of the following second order differential equation
\be
\frac12D^2q^a-\frac13\left(\mu- \frac{1}{3}\Theta^2\right)q^a=0\,,
\ee
for which the vorticity can be non-zero but curl free.  
 
\section{Alternatives to gravitational waves: How silent is "silent"?}

We note that for gravitational waves to exist in a given spacetime, both the electric and magnetic parts of the Weyl tensor, $E_{ab}$ and $H_{ab}$ must be non-zero with non-zero curl. These quantities generate a tensor wave with a closed wave equation, in a similar fashion as the electric field and the magnetic field with non-zero curl generate electromagnetic vector waves in electromagnetism. Therefore in any spacetime where either $E_{ab}$ or $H_{ab}$ or their curl vanishes identically, will be devoid of any gravitational waves. Such cosmologies, that are commonly termed as {\it silent universes} (see \cite{Ellisbook} and the references therein), any information about change in local curvature of the manifold cannot causally travel via gravitational waves. 

Obviously, conformally flat spacetimes fall in the category of {\it silent universes}, as in this case the complete Weyl tensor is identically zero, and the Riemann tensor is entirely specified by the matter variables. Therefore any information about local change of curvature must causally travel via propagation of matter disturbances. The question is: {\it Can we quantify the process via which any information about local change of spacetime curvature causally travels in conformally flat models?} In this section we transparently demonstrate two such processes, a closed tensor wave equation for matter shear and a closed vector wave equation for vorticity, that can carry such information causally. 

\begin{prop}
In a conformally flat perturbation of FLRW spacetime, the shear tensor obeys a closed and transverse-traceless tensor wave equation, which is given by
\ba\label{sw}
\Box \sigma^{\la ab\ra}&\equiv &\ddot{\sigma}^{\la ab\ra}-D^2 \sigma^{\la ab\ra}=-\Theta\dot{\sigma}^{\la ab\ra}\nonumber\\
&&+\left(\frac13\Theta^2-\frac76\mu+\frac12p\right)\sigma^{\la ab\ra}\;.
\ea
\end{prop}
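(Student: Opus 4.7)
The plan is to obtain the wave equation by computing $\ddot\sigma^{\la ab\ra}$ and $D^2\sigma^{\la ab\ra}$ separately in terms of first-order matter variables, then forming the box operator and verifying that every term other than those proportional to $\sigma^{\la ab\ra}$ and $\dot\sigma^{\la ab\ra}$ cancels at linear order.

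For the second time derivative, I would differentiate the shear evolution equation (\ref{E2}) once more in time. The inconvenient piece $[D^{\la a}\dot u^{b\ra}]\dot{}$ is rewritten via the commutation relation (\ref{com21}) as $D^{\la a}\ddot u^{b\ra} - \frac{1}{3}\Theta D^{\la a}\dot u^{b\ra}$, and the remaining first-order gradient $D^{\la a}\dot u^{b\ra}$ is recycled from (\ref{E2}) itself. Next I would use (\ref{E4}) to eliminate $\dot\pi^{\la ab\ra}$ in favour of $D^{\la a}q^{b\ra}$, $\sigma^{ab}$ and $\pi^{ab}$, and (\ref{E1}) to eliminate $\dot\Theta$, keeping only the zeroth-order part of $\dot\Theta$ when it multiplies the first-order $\sigma^{ab}$, and likewise discarding other products of two first-order quantities.

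For the Laplacian I would use Proposition~1: the shear is curl-free, so $({\rm curl}\,{\rm curl}\,\sigma)^{ab}=0$. Identity (\ref{id4}) applied with $A=\sigma$ immediately gives
\be
D^2\sigma^{\la ab\ra}=\frac{3}{2}D^{\la a}D_c\sigma^{b\ra c}+\bigl(\mu-\frac{1}{3}\Theta^2\bigr)\sigma^{ab},
\ee
and constraint (\ref{C1}) trades $D_c\sigma^{bc}$ for $\frac{2}{3}D^b\Theta - ({\rm curl}\,\omega)^b - q^b$, producing PSTF gradients of $\Theta$, of ${\rm curl}\,\omega$ and of the heat flux $q$. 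Subtracting this from $\ddot\sigma^{\la ab\ra}$, the coefficient of $\sigma^{ab}$ should collapse to the announced $\frac{1}{3}\Theta^2 - \frac{7}{6}\mu + \frac{1}{2}p$ and the damping $-\Theta\dot\sigma^{\la ab\ra}$ should appear, leaving four unwanted pieces: $D^{\la a}\ddot u^{b\ra}$, $D^{\la a}q^{b\ra}$, $D^{\la a}D^{b\ra}\Theta$, and $D^{\la a}({\rm curl}\,\omega)^{b\ra}$.

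The main obstacle is disposing of these four leftovers. To close the system I would take the PSTF spatial gradient of the momentum conservation equation (\ref{E5}), then differentiate once more in time -- again using (\ref{com21}) -- to isolate $D^{\la a}\ddot u^{b\ra}$, using (\ref{C4}) to replace $D^{\la a}D_c\pi^{b\ra c}$ by $\frac{2}{3}D^{\la a}D^{b\ra}\mu$ up to a $\Theta q$ term. The nontrivial point is that $D^{\la a}\ddot u^{b\ra}$ is fixed by the other equations to exactly the combination $D^{\la a}D^{b\ra}\Theta - D^{\la a}q^{b\ra} - \frac{3}{2}D^{\la a}({\rm curl}\,\omega)^{b\ra}$ needed for cancellation; verifying this identity requires essential use of the conformal-flatness-induced constraints established earlier -- $D_aq^a=0$ and the implicit equation of state (Proposition~2), and $D^{\la a}\omega^{b\ra}=0$ with the curl-of-vorticity formula (Proposition~3) -- and would not hold for a generic perturbed FLRW spacetime. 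The clean closure of the RHS on $\sigma^{\la ab\ra}$ is therefore a direct manifestation of the vanishing Weyl tensor.
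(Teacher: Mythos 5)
Your set-up is sound and, for two thirds of the way, coincides with the paper's computation: dotting \reff{E2}, eliminating $\dot\pi^{\la ab\ra}$ with \reff{E4} and $\dot\Theta$ with \reff{E1} gives $\ddot\sigma^{\la ab\ra}=-\sfr16(\mu-3p)\sigma^{ab}-\Theta\dot\sigma^{\la ab\ra}+\dots$, and the curl-free shear of Proposition 1 together with \reff{id4} and \reff{C1} gives $D^2\sigma^{\la ab\ra}=(\mu-\sfr13\Theta^2)\sigma^{ab}+\dots$, with exactly the leftover PSTF gradients you list. The problem is the last step. The cancellation you need,
\be
D^{\la a}\ddot u^{b\ra}=D^{\la a}D^{b\ra}\Theta-D^{\la a}q^{b\ra}-\sfr32 D^{\la a}({\rm curl}\,\om)^{b\ra}\,,
\ee
is asserted but not derivable from the system at hand. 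The linearised equations for a general matter field contain no evolution equation for $q^a$ (or constitutive relation) beyond \reff{E5} itself: \reff{E5} is a single relation among $\dot q^{\la a\ra}$, $\dot u^a$, $D^ap$ and $D_b\pi^{ab}$, so taking its gradient and another time derivative to ``isolate'' $D^{\la a}\ddot u^{b\ra}$ inevitably introduces $D^{\la a}\ddot q^{b\ra}$ and $D^{\la a}\dot p^{b\ra}$-type terms that are themselves undetermined; you are trading one unknown for another. The conformal-flatness consequences you invoke ($D_aq^a=0$, the implicit equation of state of Proposition 2, $D^{\la a}\om^{b\ra}=0$ and the curl-of-vorticity formula of Proposition 3) constrain divergences and curls, not the PSTF gradient combination above, so they cannot force the closure. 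Indeed, if the wave equation \reff{sw} held with all scalar and vector contributions retained, it would impose an extra, unjustified constraint on the scalar (density/expansion-gradient) sector of the perturbation.

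What the paper actually does at this point is the standard covariant mode extraction: since \reff{sw} is a statement about the tensor (transverse-traceless) modes, it switches off all first-order vector perturbations---acceleration, heat flux, vorticity---and the gradients of background scalars such as $D_a\Theta$ and $D_a\mu$. Then \reff{C1} gives $D^c\sigma_{bc}=0$, so $D^2\sigma_{ab}=(\mu-\sfr13\Theta^2)\sigma_{ab}$, the dotted \reff{E2} with \reff{E4} gives $\ddot\sigma^{\la ab\ra}=-\sfr16(\mu-3p)\sigma^{ab}-\Theta\dot\sigma^{\la ab\ra}$, and subtraction yields \reff{sw}; this restriction is also what makes ``transverse-traceless'' an accurate description of the wave. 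To repair your argument you should either adopt this mode decomposition explicitly, or prove the displayed cancellation identity from an additional assumption on the matter model---as written, the closure is an unproven claim, and attributing it to the vanishing of the Weyl tensor overstates what the constraints \reff{C6} and Propositions 2--3 deliver.
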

\begin{proof}
Since in this case we are only concerned with the tensor modes, we use the standard procedure of neglecting all first order vector perturbations, namely the gradient of background scalars together with the acceleration, heat flux and vorticity. Since the shear tensor is curl free, $({\rm curl}\, {\rm curl}\,\sigma)_{ab}=0$, and equation (\ref{id4}) becomes
\be
D^2\sigma_{ab}=\frac32D_{\la a}D^c\sigma_{b\ra c}+\left(\mu- \frac{1}{3}\Theta^2\right)\sigma_{ab}\;.
\ee
Now, the first term in the RHS is linked to vorticity, heat flux and gradient of expansion by constraint (\ref{C1}). Neglecting that term we have 
\be\label{sw1}
D^2\sigma_{ab}=\left(\mu- \frac{1}{3}\Theta^2\right)\sigma_{ab}\;.
\ee
Furthermore, taking the dot of shear evolution equation (\ref{E2}) and neglecting the acceleration term, we get
\be\label{eq2}
\ddot{\sigma}^{\la ab \ra}=\frac{1}{2}\dot{\pi}^{\la ab\ra} -\frac{2}{3}\dot{\Theta}\sigma^{ab} -\frac{2}{3}\Theta\left(\frac{1}{2}\pi^{ab} -\frac{2}{3}\Theta\sigma^{ab}\right)\;.
\ee
Using the evolution equation (\ref{E4}) and neglecting the heat flux term, we have 
\be
\dot{\pi}^{<ab>}= -(\mu +p)\sigma^{ab}- \frac{\Theta}{3}\pi^{ab}\;.
\ee
Plugging this in (\ref{eq2}), and noting that $\pi^{ab}=2\dot{\sigma}^{\la ab \ra}+\frac{4}{3}\Theta\sigma^{ab}$, we get
\be\label{sw2}
\ddot{\sigma}^{\la ab \ra}=-\frac16(\mu-3p)\sigma^{ab}-\Theta\dot{\sigma}^{\la ab \ra}\;.
\ee
Subtracting equation (\ref{sw1}) from (\ref{sw2}), we get the required tensor wave equation  (\ref{sw}).
\end{proof}
It is interesting to note that similar shear wave exists, even when the perturbations are not conformally flat, but the matter is taken to be perfect fluid, as proved in \cite{Dunsby}. What we showed here is that these waves do not go away, when we take a general form of matter perturbation and restrict the Weyl tensor to be identically zero. Also, when the expansion of the spacetime is positive, these waves get damped as they move towards the causal future. 

\begin{prop}
In a conformally flat perturbation of FLRW spacetime, if the acceleration is curl free, then the vorticity vector obeys a closed vorticity wave equation, given by
\be\label{vw}
\Box\om^{\la a \ra}\equiv\ddot{\om}^{\la a \ra}-D^2{\om}^{\la a \ra}=\left(\mu+p+\frac49\Theta^2\right)w^a\;.
\ee
\end{prop}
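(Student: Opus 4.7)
The strategy is to compute $\ddot{\om}^{\la a\ra}$ and $D^2\om^a$ independently, each as a background multiple of $\om^a$, and then subtract to obtain a closed wave equation. Both pieces must collapse onto $\om^a$ alone (with no residual dependence on heat flux, shear, or acceleration) for the equation to close, and the curl-free acceleration hypothesis together with Propositions~2 and~3 should make this possible.

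For the time sector, the vorticity evolution equation (\ref{E3}), combined with $\epsilon^{abc}D_b\dot{u}_c=0$, reduces to $\dot{\om}^{\la a\ra}=-\tfrac{2}{3}\Theta\om^a$. Differentiating once more along $u^a$ gives $\ddot{\om}^{\la a\ra}=-\tfrac{2}{3}\dot{\Theta}\,\om^a+\tfrac{4}{9}\Theta^2\om^a$. Because $\om^a$ is first order, only the background part of $\dot{\Theta}$ survives when multiplied against it, and the Raychaudhuri equation (\ref{E1}) then supplies $\dot{\Theta}=-\tfrac{1}{3}\Theta^2-\tfrac{1}{2}(\mu+3p)$, yielding $\ddot{\om}^{\la a\ra}=\bigl(\tfrac{2}{3}\Theta^2+\tfrac{1}{3}\mu+p\bigr)\om^a$.

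For the spatial sector, I would invoke Proposition~3: the spatial variation tensor $D_a\om_b$ is purely antisymmetric, so $D_b\om^a=-D^a\om_b$ and $D_a\om^a=0$. Then $D^2\om^a=D^bD_b\om^a=-D^bD^a\om_b=-D^aD^b\om_b-[D^b,D^a]\om_b=-[D^b,D^a]\om_b$, where the last step uses $D^b\om_b=0$. Contracting the vector commutation relation (\ref{com23}) with the projection tensor gives $[D^b,D^a]\om_b=\tfrac{2}{3}(\mu-\tfrac{1}{3}\Theta^2)\om^a$, hence $D^2\om^a=-\tfrac{2}{3}(\mu-\tfrac{1}{3}\Theta^2)\om^a$. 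Subtracting from the time-sector expression produces the required relation $\ddot{\om}^{\la a\ra}-D^2\om^a=\bigl(\mu+p+\tfrac{4}{9}\Theta^2\bigr)\om^a$.

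The main obstacle is the Laplacian step. Applying identity (\ref{id3}) directly to $\om^a$ yields only $({\rm curl}\,{\rm curl}\,\om)^a=-D^2\om^a+\tfrac{2}{3}(\mu-\tfrac{1}{3}\Theta^2)\om^a$, and re-expressing $({\rm curl}\,{\rm curl}\,\om)^a$ through Proposition~3's formula for ${\rm curl}\,\om$ in terms of $q^a$ produces exactly the same identity---a tautology. The genuine input that breaks the circularity is the stronger content of Proposition~3, that $D_a\om_b$ is itself antisymmetric, which allows one to flip an index through the Laplacian and exchange $D^2\om^a$ for a contracted commutator. Tracking the index raising and lowering through that contracted commutator is where slips are most likely.
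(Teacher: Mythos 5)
Your proposal is correct, and the final coefficients match equation (\ref{vw}); the time sector is handled exactly as in the paper (curl-free acceleration reduces (\ref{E3}) to $\dot{\om}^{\la a\ra}=-\tfrac23\Theta\om^a$, then Raychaudhuri (\ref{E1}) at background order gives $\ddot{\om}^{\la a\ra}$ proportional to $\om^a$; note your coefficient $\tfrac13(2\Theta^2+\mu+3p)$ is the one consistent with the final wave equation, whereas the paper's intermediate equation (\ref{eqn5}) displays $2p$, apparently a typo). Where you genuinely diverge is the spatial sector. The paper never touches the antisymmetry of $D_a\om_b$ at this point: it takes the divergence of the curl-free shear condition, $0=D^b({\rm curl}\,\sigma)_{ab}=\tfrac12{\rm curl}\,(D^b\sigma_{ab})$ via (\ref{id2}), then feeds in the momentum constraint (\ref{C1}), the commutator (\ref{com12}) for ${\rm curl}\,D_a\Theta=2\om_a\dot{\Theta}$, the identity (\ref{id3}) together with $D_a\om^a=0$, the heat-flux--vorticity constraint (\ref{C5}), and the background Raychaudhuri equation, all of which conspire to give $D^2\om^a=(\tfrac29\Theta^2-\tfrac23\mu)\om^a$. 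Your route gets the same Laplacian by using only Proposition 3 (antisymmetry of $D_a\om_b$, valid at all epochs and on the whole hypersurface, so it may be differentiated) plus the Ricci-identity commutator (\ref{com23}): flipping the index and using $D^b\om_b=0$ turns $D^2\om^a$ into minus a contracted commutator, and the contraction indeed yields $\tfrac23(\mu-\tfrac13\Theta^2)\om^a$ with the sign you state. What your approach buys is economy and transparency in the spatial sector---no appeal to (\ref{C1}), (\ref{C5}) or the Raychaudhuri equation there, and you correctly identified that the naive use of (\ref{id3}) with Proposition 3's curl formula is circular, the real input being the vanishing of the symmetric part of $D_a\om_b$. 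What it costs is that the physical content enters indirectly: Proposition 3 itself rests on the curl-free shear (Proposition 1) through constraint (\ref{C3}), so the same hypothesis is used, just packaged one level up; the paper's longer chain has the merit of exhibiting explicitly how the heat flux drops out of the final equation.
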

\begin{proof}
The proof of this proposition crucially depends on the result of Proposition 1, that is the shear tensor being curl free. In that case we can use (\ref{id2}) to write
\be
0=D^b({\rm curl}\, \sigma)_{ab}=\frac12{\rm curl}\,(D^b\sigma_{ab})\;,
\ee
which can be further simplified using the constraint (\ref{C1}), and we get
\be
\frac13 ({\rm curl}\, D_a\Theta)-\frac12({\rm curl}\, {\rm curl}\, \om)_a-\frac12({\rm curl}\, q)_a=0\;.
\ee
Now using the commutation (\ref{com12}) for the first term in the LHS, the identity (\ref{id3}) for the second term and the constraint (\ref{C5}) for the third term, we get 
\be
D^2 \om^a=\left(\frac29\Theta^2-\frac23\mu\right)\om^a\;.
\label{eqn4}
\ee
Furthermore, when the curl of the acceleration term vanishes, we have 
\be
{\dot\om}^{\la a\ra}=-\frac23\Theta\om^a\;.
\ee
Taking the dot of the above equation and using the evolution equation (\ref{E1}), we get
\be
\ddot{\om}^{\la a \ra}=\frac13\left(2\Theta^2+\mu+2p\right)\om^a\;.
\label{eqn5}
\ee
Subtracting (\ref{eqn4}) from (\ref{eqn5}), we get the required result.
\end{proof}

\section{Discussion}

Geometrical properties of general conformally flat spacetimes are still under active investigations. The aim is to understand transparently, how different geometrical and thermodynamical quantities of spacetime interact in the absence of free gravity, that is generated by the Weyl tensor. Taking the problem other way round, this understanding will definitely help us in recognising the effects of free gravity with better clarity. 
In this paper we tried to shed some light on this problem by considering a linearised but conformally flat perturbation of FLRW background, which is the well known and simplest non-trivial conformally flat solution of Einstein field equations. \\

Working in the linearised regime, we transparently demonstrated some interesting features of matter shear and vorticity and about how they are powered by different thermodynamic quantities of matter, like energy density, heat flux, isotropic pressure and anisotropic stress. These results are novel, as they clearly show the role vorticity plays in the conformally flat scenarios, as hardly any physically realistic and rotating conformally flat solutions have been found till now.  Although these results are only valid in the linearised regime, they give an indication as to how these quantities can behave in a more general setting of conformally flat spacetimes. \\

The most important point that emerged from this investigation, is that both the matter shear and the vorticity obey a transverse traceless tensor wave equation and a vector wave equation respectively. These shear and vorticity waves actually replace the gravitational waves, that these spacetimes are devoid of, in the sense that any information about local change in the curvature of the spacetime can be propagated causally via these waves. Presence of these waves makes the dynamics of relativistic and conformally flat fluid flows extremely interesting and can shed new light on the general conformally flat solutions of the Einstein field equations. 

\begin{acknowledgments}
 RM and RG are supported by National Research Foundation (NRF), South Africa. SDM 
acknowledges that this work is based on research supported by the South African Research Chair Initiative of the Department of
Science and Technology and the National Research Foundation.
\end{acknowledgments}

\end{document}